\newtheorem{theorem}{Theorem}
\newtheorem{lemma}[theorem]{Lemma}
\newtheorem{prop}[theorem]{Proposition}
\newtheorem{cor}[theorem]{Corollary}
\theoremstyle{definition}
\newcommand{\tinyspace}{\mspace{1mu}}
\newcommand{\rank}{\operatorname{rank}}
\renewcommand{\det}{\operatorname{Det}}
\newcommand{\abs}[1]{\lvert #1 \rvert}
\newcommand{\ip}[2]{\langle #1 , #2\rangle}
\newcommand{\biggip}[2]{\biggl\langle #1, #2 \biggr\rangle}
\newcommand{\floor}[1]{\lfloor #1 \rfloor}
\newcommand{\norm}[1]{\lVert\tinyspace #1 \tinyspace\rVert}
\newcommand{\iso}{\cong}
\newcommand{\I}{\mathds{1}}
\newcommand{\setft}[1]{\mathrm{#1}}
\newcommand{\Proj}{\setft{Proj}}
\newcommand{\Lin}{\setft{L}}
\newcommand{\pro}[1]{\setft{Prod}\left(#1 \right)}
\newcommand{\proS}[1]{\setft{Prod}\S\left(#1 \right)}
\newcommand{\corr}[1]{\setft{Cor}\left(#1 \right)}
\newcommand{\corrr}[2]{\setft{Cor}_{#1} \left( #2 \right)}
\newcommand{\diag}[1]{\setft{Diag}\left(#1 \right)}
\newcommand{\con}[1]{\overline{#1}}
\newcommand{\unitary}[1]{\setft{U}\left(#1\right)}
\newcommand{\sphere}[1]{\mathcal{S}\!\left(#1\right)}
\newcommand{\complex}{\mathbb{C}}
\newcommand\X{\mathcal{X}}
\newcommand\Y{\mathcal{Y}}
\renewcommand\S{\mathcal{S}}
\DeclareMathOperator{\spn}{span}
\begin{document}
\emergencystretch 3em

\title{\bf On decomposable correlation matrices}

\author{
  Benjamin Lovitz
  \\[2mm]
  {\it Institute for Quantum Computing and Department of Applied Mathematics}\\
  {\it University of Waterloo, Canada}}

\maketitle

\begin{abstract}
Correlation matrices (positive semidefinite matrices with ones on the diagonal) are of fundamental interest in quantum information theory. In this work we introduce and study the set of $r${\it-decomposable} correlation matrices: those that can be written as the Schur product of correlation matrices of rank at most $r$. We find that for all $r \geq 2$, every $(r+1) \times (r+1)$ correlation matrix is $r$-decomposable, and we construct ${(2r+1) \times (2r+1)}$ correlation matrices that are not $r$-decomposable. One question this leaves open is whether every $4 \times 4$ correlation matrix is $2$-decomposable, which we make partial progress toward resolving. We apply our results to an entanglement detection scenario.
\end{abstract}

\section{Introduction}
Correlation matrices have been a topic of considerable interest in quantum information theory  \cite{Devetak2005,Cleve:2008:PPR:1391349.1391350, PhysRevA.81.062312,Haagerup2011,7389396,Yu2017BoundsOT, marwah2018characterisation}. This interest is due in part to Tsirelson's theorem~\cite{Tsirelson1987}, which reveals an intimate connection between correlation matrices and certain types of correlations that can arise from quantum systems. Another motivation is the identification of correlation matrices with Schur channels, examples of which include physically relevant channels such as generalized dephasing channels, cloning channels, and the Unruh channel \cite{PhysRevA.81.062312}. In this work, we introduce and study the set of {\it decomposable correlation matrices}, and apply our results to an entanglement detection scenario. We note that a related question regarding coherent states was recently studied in \cite{marwah2018characterisation}.

For a positive integer $n$, let $\Lin (\complex^n)$ denote the set of linear operators on $\complex^n$, and let $\corr{\complex^n}\subset \Lin(\complex^n)$ denote the set of {\it correlation matrices}: positive semidefinite matrices with diagonal entries all equal to one. We say a correlation matrix $P \in \corr{\complex^n}$ is $r${\it-decomposable} if it can be written as the Schur product $\odot$ (also known as the Hadamard product, entrywise product, or pointwise product) of correlation matrices of rank $\leq r$, i.e.,
\begin{align}\label{decomp}
P=R_1 \odot \dots \odot R_m
\end{align}
for some positive integer $m$ and correlation matrices $R_1,\dots,R_m \in \corr{\complex^n}$ with ${\rank(R_i)\leq r}$ for all $i \in \{1,\dots,m\}$. We use $\corrr{r}{\complex^n}$ to denote the set of $r$-decomposable matrices.

It is well known that $\corr{\complex^n}$ is a compact and convex set. To our knowledge, it is not known whether $\corrr{r}{\complex^n}$ is closed, and we leave this question unanswered. We show that $\corrr{r}{\complex^n}$ is not convex when $r \geq 1$ and $n \geq 2r+1$.

It is clear that $\corrr{r}{\complex^n}=\corr{\complex^n}$ for all $n \leq r$. We prove that ${\corrr{n-1}{\complex^n}=\corr{\complex^n}}$ for all $n \geq 3$, but $\corrr{r}{\complex^n} \subsetneq \corr{\complex^n}$ for all $n \geq 2r+1$. This leaves open the question of whether the containment  $\corrr{r}{\complex^n}\subseteq \corr{\complex^n}$ is strict for $n \in \{r+2, \dots, 2r\}$, and in particular whether $\corrr{2}{\complex^4}\subseteq \corr{\complex^4}$ is strict. We reduce the latter to a simpler question of whether every element of a certain subset of $\corr{\complex^4}$ can be written as the Schur product of just two rank-two correlation matrices, which could make the problem more tractable for analytical or numerical approaches.

We apply our results to the following entanglement detection scenario. Say we are given many copies of unknown pure states $v_1 v_1^*, \dots, v_n v_n^*$, on which we are allowed to perform any of the measurements $\{v_1 v_1^*, \I-v_1 v_1^*\}, \dots, \{v_n v_n^*, \I-v_n v_n^*\}$, and we wish to detect that for any partitioning of the space into subsystems of dimension $\leq r$, at least one of the states must be entangled. This scenario is similar to our $r$-decomposability question, as the only meaningful information to be gained from performing the allowed measurements is precisely the inner products $\ip{v_a v_a^*}{v_b v_b^*}$ for $a,b \in \{1,\dots, n\}$. In Proposition~\ref{counterexamplecor} we find cases of this scenario in which one can indeed detect entanglement.

In Section~\ref{mp} we review some mathematical preliminaries, in Section~\ref{correlation} we present our main results on $\corrr{r}{\complex^n}$ and apply them to entanglement detection, and in Section~\ref{4?} we study the question of whether the containment $\corrr{2}{\complex^4}\subseteq \corr{\complex^4}$ is strict.
\section{Mathematical preliminaries}\label{mp}

Here we review some elementary facts and definitions we use. We often find it convenient to identify a complex Euclidean space by a symbol such as $\X$ or $\Y$, rather than specifying the isomorphic space $\complex^n$, because it allows us to refer to multiple spaces that could be isomorphic to each other. For any complex Euclidean space $\X$, let $\ip{\cdot}{\cdot}: \X \times \X \rightarrow \complex$ be the standard Euclidean inner product that is conjugate-linear in the first argument and linear in the second argument. Let $\norm{\cdot}=\sqrt{\ip{\cdot}{\cdot}}$ be the Euclidean norm, and define the set of {\it unit vectors} $\sphere{\X}$ as the set of vectors $x \in \X$ that satisfy $\norm{x}=1$. For a non-negative integer $a$ we let $e_a$ denote the standard basis vector with $1$ in the $a$-th position and zeros elsewhere. We use the convention $[m]:=\{1,\dots, m\}$ for any positive integer $m$.

 For a positive integer $m$ and complex Euclidean spaces $\mathcal{X}_1, \dots, \mathcal{X}_m$, we say a vector (or {\it tensor})
\begin{align}
x \in \mathcal{X}_1\otimes  \dots \otimes \mathcal{X}_m
\end{align}
is a {\it product vector} (or {\it elementary tensor}) if it is non-zero and can be written as
\begin{align}\label{product}
x=x_1 \otimes \dots \otimes x_m
\end{align}
for some collection of non-zero vectors $x_1 \in \X_1$, \dots, $x_m \in \X_m$. If $x$ is not a product vector and is non-zero then we say $x$ is {\it entangled}. We use $\pro{\X_1 : \dots : \X_m}$ to denote the set of product vectors in $\mathcal{X}_1\otimes  \dots \otimes \mathcal{X}_m$, and $\proS{\X_1 : \dots : \X_m}$ to denote the set of unit product vectors. We refer to the spaces $\X_1, \dots, \X_m$ that compose the space $\X_1 \otimes \dots \otimes \X_m$ as {\it subsystems}.

For positive integers $n$ and $m$, we frequently define sets of product vectors
\begin{align}
\{ x_a : a \in [n] \} \subset \pro{\X_1 : \dots : \X_m}
\end{align}
without explicitly defining for each $a \in [n]$ corresponding vectors $x_{a,1},\dots, x_{a,m}$ for which
\begin{align}
x_a=x_{a,1}\otimes \dots \otimes x_{a,m}.
\end{align}
In this case, we implicitly fix some such set of vectors $x_{a,1},\dots, x_{a,m}$ (they are unique up to scalar multiples $\alpha_{a,1} x_{a,1}, \dots, \alpha_{a,m} x_{a,m}$ such that $\alpha_{a,1}\cdots \alpha_{a,m}=1$), and refer to the vectors $x_{a,j}$ without further introduction. We use symbols like $a, b, c$ to index vectors, and symbols like $i, j, k$ to index subsystems.

We conclude this section by reviewing some elementary facts about correlation matrices. It is straightforward to verify that a matrix $P \in \Lin(\complex^n)$ is contained in $\corr{\complex^n}$ if and only if $P=T^* T$ for some linear operator $T \in \Lin(\complex^n, \complex^s)$ (and positive integer $s$), the columns of which form unit vectors. We say $P$ is {\it generated by} some set of unit vectors $\{v_a: a \in [n]  \}\subset \sphere{\complex^s}$ if these vectors can be chosen as the columns of $T$. Note that ${P(a,b)=\ip{v_a}{v_b}}$, so $P$ is the matrix of inner products (i.e. the {\it Gram matrix}) of any generating set of unit vectors. Two sets of unit vectors $\{v_a : a \in [n] \} \subset \sphere{\complex^{s_1}}$ and ${\{u_a : a \in [n] \} \subset \sphere{\complex^{s_2}}}$ with $s_1\leq s_2$ generate the same correlation matrix if and only if there exists an isometry $U \in \unitary{\complex^{s_1},\complex^{s_2}}$ such that $U v_a = u_a$ for all $a \in [n]$. This property follows from the standard result that two operators $T_1 \in \Lin(\complex^n,\complex^{s_1})$ and $T_2 \in \Lin(\complex^n,\complex^{s_2})$ satisfy $T_1^*T_1^{}=T_2^* T_2^{}$ if and only if $T_2=U T_1 $ for some isometry $U \in \unitary{\complex^{s_1},\complex^{s_2}}$. Note that by linearity, the linear dependence of every generating set is the same. It is straightforward to verify that a correlation matrix $P \in \corr{\complex^n}$ is $r$-decomposable if and only if there exists a positive integer $m$, complex Euclidean spaces $\X_1, \dots, \X_m$ with $\dim{\X_i} \leq r$ for all $i \in [m]$, and a set of unit product vectors ${\{x_a : a \in [n]\} \subset \proS{\X_1 :\dots : \X_m}}$ that generate $P$.

\section{Results on $r$-decomposable correlation matrices}\label{correlation}

Here we state and prove our main results on $r$-decomposable correlation matrices.

\begin{theorem}\label{rankreduce}
For any integers $r\geq 2$ and $n \leq r+1$, $\corrr{r}{\complex^n}=\corr{\complex^n}$. More generally, let $\X$ be a complex Euclidean space and $P \in \corr{\X}$ be a correlation matrix. If $\rank(P)\geq 3$ and $P$ is generated by a set of unit vectors that contains a vector linearly independent from the rest, then $P$ is $(\rank(P)-1)$-decomposable.
\end{theorem}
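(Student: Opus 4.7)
My plan is to reduce the rank-bound statement to the ``more general'' one and then prove the latter by exhibiting a generating set of unit product vectors in $\complex^{s-1}\otimes\complex^2$, where $s := \rank(P)$. For the reduction: if $n \leq r$ then every $P \in \corr{\complex^n}$ has rank at most $r$ and is trivially $r$-decomposable (use a single factor). If $n = r + 1$ and $\rank(P) = r + 1$, then any generating set of $r + 1$ unit vectors in $\complex^{r+1}$ must be a basis, so each vector is linearly independent from the rest; since $r + 1 \geq 3$, the more general statement then applies and yields $r$-decomposability.

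For the general statement, I fix a generating set $\{v_a\}_{a=1}^n$ in which $v_1$ is linearly independent from $v_2, \dots, v_n$, and set $H := \spn(v_2, \dots, v_n)$, which has dimension $s-1$. Because $s \geq 3$, both $s-1$ and $2$ are at most $s-1$, so by the characterization of decomposability in Section~\ref{mp} it suffices to produce a generating set of unit product vectors in $\complex^{s-1}\otimes\complex^2$. The guiding idea is to place $v_2, \dots, v_n$ in the first tensor factor via the natural isometric embedding of $H$ into $\complex^{s-1}$, and to absorb the component of $v_1$ lying outside $H$ using the second factor.

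Concretely, I pick an orthonormal basis $\{h_1, \dots, h_{s-1}\}$ of $H$ with $h_1$ aligned with the orthogonal projection $\pi_H(v_1)$ (any choice works if $\pi_H(v_1) = 0$), and set $\mu := \norm{\pi_H(v_1)} \in [0, 1)$, noting $\mu < 1$ because $v_1 \not\in H$. Under the isometry $H \cong \complex^{s-1}$ sending $h_i \mapsto e_i$, let $\tilde v_a \in \complex^{s-1}$ denote the image of $v_a$ for $a \geq 2$, and define unit product vectors
\[
x_a := \tilde v_a \otimes e_1 \quad (a \geq 2), \qquad x_1 := e_1 \otimes (\mu e_1 + \sqrt{1-\mu^2}\, e_2)
\]
in $\complex^{s-1} \otimes \complex^2$.

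Verifying that these generate $P$ is a short computation: for $a, b \geq 2$ the second factor contributes $1$ and the first gives $\ip{\tilde v_a}{\tilde v_b} = \ip{v_a}{v_b}$ by the isometry; for $a = 1$, $b \geq 2$ one obtains $\ip{x_1}{x_b} = \mu \ip{e_1}{\tilde v_b} = \mu \ip{h_1}{v_b}$, which equals $\ip{v_1}{v_b}$ because $v_1 - \mu h_1 \in H^\perp$ while $v_b \in H$. The main subtlety, and the step I see as the genuine idea of the proof, is aligning $h_1$ with $\pi_H(v_1)$: without this alignment, $\ip{v_1}{v_b}$ would involve several coordinates of $\tilde v_b$ and could not be reproduced by a rank-one second factor. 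Once $h_1$ is so chosen, the construction goes through uniformly whether $\mu > 0$ or $\mu = 0$.
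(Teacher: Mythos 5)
Your proof is correct and is essentially the paper's argument: your product vectors in $\complex^{s-1}\otimes\complex^2$ correspond exactly to the paper's Schur factors $R$ (the Gram matrix of the normalized projections onto $H$) and $Q$ (the rank-two correction carrying the factor $\mu=\norm{\pi_H(v_1)}$), with the added minor benefit that aligning $h_1$ with $\pi_H(v_1)$ lets you treat the $\mu=0$ and $\mu>0$ cases uniformly, whereas the paper splits them to avoid dividing by $\norm{\Pi v_c}$. The only (trivial) omission is the subcase $n=r+1$ with $\rank(P)\leq r$, which is $r$-decomposable with a single factor just as in your $n\leq r$ case.
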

\begin{proof}
We first prove the general statement. Let $\{v_a: a\in [n]\}$ be a set of unit vectors that generate $P$ such that
\begin{align}
v_c \notin \spn \{ v_a : a \in [n]\setminus \{c\}\}
\end{align}
for some index $c \in [n]$.

If $v_{c}$ is orthogonal to every other vector, then the construction is easy: the set of vectors with each $v_a$ replaced by $v_a\otimes e_0$ for $a \neq c$, and $v_c$ replaced by $v'_c\otimes e_1$ for any unit vector $v'_c \in \spn \{v_a : a \in [n]\setminus \{c\}\}$ generates $P$. This is a $(\rank(P)-1)$-decomposition of $P$, since
\begin{align}
\dim \spn \{v_a : a \in [n]\setminus \{c\}\}=\rank(P)-1
\end{align}
and 
\begin{align}
\dim\spn\{e_0,e_1\}=2\leq \rank(P)-1.
\end{align}

If $v_{c}$ is not orthogonal to every other vector, then define
\begin{align}
\Pi:=\Proj \left( \spn \{ v_a : a \in [n]\setminus \{c\}\} \right),
\end{align}
and define two correlation matrices $R$ and $Q$ as
\begin{align}
R(a,b)=\frac{\ip{v_a}{\Pi v_b}}{\norm{ \Pi {v_a}} \norm{\Pi {v_b}}}
\end{align}
and
\begin{align}
Q(a,b)=\begin{cases}
\norm{\Pi {v_c}}, & a\neq b \text{ and } c\in \{a,b\}\\
1, & \text{otherwise.}
\end{cases}
\end{align}
It is straightforward to verify that $P=R\odot Q$. Indeed, for $c \notin \{a,b\}$,
\begin{align}
(R \odot Q) (a,b)=\frac{\ip{v_a}{\Pi v_b}}{\norm{ \Pi {v_a}} \norm{\Pi {v_b}}}=\ip{v_a}{v_b}.
\end{align}
Otherwise,
\begin{align}
(R \odot Q) (a,c)=\frac{\ip{v_a}{\Pi v_c}}{\norm{ \Pi {v_a}} \norm{\Pi {v_c}}}\norm{\Pi {v_c}}=\ip{v_a}{v_c},
\end{align}
and similarly, $(R \odot Q) (c,a)=\ip{v_c}{v_a}$. The correlation matrix $R$ has ${\rank(R)=\rank(P)-1}$, and is generated by the unit vectors $\Pi {v_a}/\norm{\Pi {v_a}}$. The correlation matrix $Q$ is clearly rank $2$. This completes the proof of the general statement.

For the first statement, let $r \geq 2$ be an integer. It is clear that $\corrr{r}{\complex^n}=\corr{\complex^n}$ for all $n \leq r$, and by the above construction, $\corrr{r}{\complex^{r+1}}=\corr{\complex^{r+1}}$.
\end{proof}

Now we find cases in which $\corrr{r}{\complex^n}\subsetneq \corr{\complex^{n}}$. We require the following lemma, which we reference without proof. We note that this lemma holds more generally over an arbitrary field.

\begin{lemma}[\cite{westwick1967,johnston2011characterizing}, Corollary 10 in \cite{tensor}]\label{l1} Let $m \geq 1$ be an integer, let $\mathcal{X}_1, \dots, \mathcal{X}_m$ be complex Euclidean spaces, and let $x_1, x_2 \in \pro{\mathcal{X}_1 : \dots : \mathcal{X}_m}$ be product vectors. Then the following statements are equivalent:
\begin{enumerate}
\item For all scalars $\alpha_1, \alpha_2 \in \complex$, it holds that $\alpha_1{x_1}+\alpha_2 {x_2}\in \pro{\mathcal{X}_1 : \dots : \mathcal{X}_m} \cup \{0\}$.
\item For some non-zero scalars $\alpha_1, \alpha_2 \in 
\complex \setminus \{0\}$, it holds that\\ ${\alpha_1{x_1}+\alpha_2 {x_2}\in \pro{\mathcal{X}_1 : \dots : \mathcal{X}_m}\cup \{0\}}$.
\item There exists at most a single index $j \in [m]$ for which $\dim\spn \{{x_{1,j}}, x_{2,j}\}=2$.
\end{enumerate}
\end{lemma}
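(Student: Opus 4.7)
The plan is to prove the cyclic chain of implications $(3) \Rightarrow (1) \Rightarrow (2) \Rightarrow (3)$. The implication $(1) \Rightarrow (2)$ is immediate (take any nonzero scalars).

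For $(3) \Rightarrow (1)$, I would argue by direct construction. If there is no index with $\dim\spn\{x_{1,j}, x_{2,j}\} = 2$, then $x_1$ and $x_2$ are parallel and every linear combination is a scalar multiple of $x_1$. Otherwise, let $j^\ast$ be the unique index with $\dim\spn\{x_{1,j^\ast}, x_{2,j^\ast}\} = 2$. For each $j \neq j^\ast$ there exists $c_j \in \complex \setminus \{0\}$ with $x_{2,j} = c_j x_{1,j}$. Absorbing $\prod_{j \neq j^\ast} c_j$ into the $j^\ast$-factor of $x_2$, I can rewrite
\begin{align}
\alpha_1 x_1 + \alpha_2 x_2 = x_{1,1} \otimes \dots \otimes x_{1,j^\ast - 1} \otimes \bigl(\alpha_1 x_{1,j^\ast} + \alpha_2 \tilde{x}_{2,j^\ast}\bigr) \otimes x_{1,j^\ast+1} \otimes \dots \otimes x_{1,m},
\end{align}
which is either zero or a product vector.

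The main work is in $(2) \Rightarrow (3)$. Suppose $\alpha_1 x_1 + \alpha_2 x_2 = y$ is a product vector (or zero) for some nonzero $\alpha_1, \alpha_2$. For any index $k \in [m]$, consider the flattening that regards a tensor as a matrix with rows indexed by $\X_k$ and columns indexed by $\bigotimes_{j \neq k} \X_j$. Writing $x_{i,\hat{k}} := \bigotimes_{j \neq k} x_{i,j}$ for $i \in \{1,2\}$, the flattening of $\alpha_1 x_1 + \alpha_2 x_2$ is the rank-at-most-$2$ matrix $\alpha_1 x_{1,k}\, x_{1,\hat{k}}^\t + \alpha_2 x_{2,k}\, x_{2,\hat{k}}^\t$. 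Since the flattening of a product vector (or zero) has rank at most one, and a sum of two rank-one matrices whose row-factors and column-factors are both pairs of linearly independent vectors has rank exactly two, I conclude that for every $k$, either $\{x_{1,k}, x_{2,k}\}$ or $\{x_{1,\hat{k}}, x_{2,\hat{k}}\}$ is linearly dependent. The key auxiliary fact I would invoke is that two nonzero product tensors are linearly dependent if and only if their factors are pairwise linearly dependent on every subsystem, so $\{x_{1,\hat{k}}, x_{2,\hat{k}}\}$ being linearly dependent is equivalent to $\{x_{1,j}, x_{2,j}\}$ being linearly dependent for every $j \neq k$.

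To finish, suppose for contradiction that two distinct indices $k_1 \neq k_2$ both satisfy $\dim\spn\{x_{1,k_i}, x_{2,k_i}\} = 2$. Applying the dichotomy above with $k = k_1$ forces either $\{x_{1,k_1}, x_{2,k_1}\}$ linearly dependent (contradicting the choice of $k_1$) or $\{x_{1,j}, x_{2,j}\}$ linearly dependent for every $j \neq k_1$ (contradicting the choice of $k_2$). The main obstacle, and the only nontrivial step, is the flattening / rank-two argument; everything else is bookkeeping. One subtlety to handle carefully is the degenerate case $y = 0$ in (2), but then some $x_{i,k}$ may drop out, and the same argument (or a minor variant using that $x_1, x_2$ are product vectors and hence nonzero) applies.
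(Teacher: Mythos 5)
The paper does not prove this lemma at all: it is stated with the remark \enquote{we reference without proof} and attributed to external sources (Westwick, Johnston--Kribs, and Corollary 10 of the author's companion paper), so there is no in-paper argument to compare against. Your proof is correct and self-contained, which is strictly more than the paper offers. The cycle $(3)\Rightarrow(1)\Rightarrow(2)\Rightarrow(3)$ is well organized: $(3)\Rightarrow(1)$ by absorbing the proportionality constants $c_j$ into the single non-parallel factor is exactly right, and the heart of $(2)\Rightarrow(3)$ --- flattening at subsystem $k$, observing that the flattening of a product vector (or zero) has rank at most one while $\alpha_1 x_{1,k}x_{1,\hat k}^{\t}+\alpha_2 x_{2,k}x_{2,\hat k}^{\t}$ has rank exactly two when both the $k$-factors and the complementary factors are independent --- is the standard and cleanest route. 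The auxiliary fact that two nonzero elementary tensors are parallel iff they are parallel factorwise is true and you use it correctly to convert \enquote{$\{x_{1,\hat k},x_{2,\hat k}\}$ dependent} into \enquote{dependent on every $j\neq k$}; if you write this up formally you should supply its one-line proof (flatten $x_{i,\hat k}$ at each remaining subsystem $j$ and compare column spaces), since it is the only step you currently assert rather than argue. The degenerate case $y=0$ needs no special variant: the zero matrix has rank $0\leq 1$, so the same dichotomy applies verbatim, and your parenthetical about factors \enquote{dropping out} can simply be deleted.
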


\begin{theorem}\label{counterexample}
For all integers $r\geq 1$ and $n \geq 2r+1$, $\corrr{r}{\complex^n}\subsetneq \corr{\complex^{n}}$.
\end{theorem}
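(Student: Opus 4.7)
It suffices to handle the base case $n=2r+1$: given any $P\in\corr{\complex^{2r+1}}\setminus\corrr{r}{\complex^{2r+1}}$, the block-diagonal extension $P\oplus\I_{n-2r-1}\in\corr{\complex^n}$ is also not $r$-decomposable, because principal submatrices of $r$-decomposable matrices are $r$-decomposable (restrict the generating product vectors to the relevant indices). So we need only exhibit, for each $r\geq 1$, a counterexample in $\corr{\complex^{2r+1}}$.

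For $r=1$, the argument is immediate: rank-$1$ correlation matrices are generated by unit vectors in $\complex$, i.e., phases, so every rank-$1$ correlation matrix has all entries of modulus $1$. Since this property is preserved by the Schur product, every element of $\corrr{1}{\complex^n}$ has all entries of modulus $1$. Hence any correlation matrix with an entry of modulus strictly less than $1$ (for example $\I_3$) lies outside $\corrr{1}{\complex^3}$.

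For $r\geq 2$, I would choose unit vectors $v_1,\ldots,v_{2r+1}\in\sphere{\complex^{r+1}}$ in general linear position (any $r+1$ of them are linearly independent) and let $P$ be the correlation matrix they generate, so $\rank(P)=r+1$. Suppose for contradiction $P\in\corrr{r}{\complex^{2r+1}}$; by the characterization recalled in Section~\ref{mp}, there exist unit product vectors $\{x_a\}_{a\in[2r+1]}\subset\proS{\X_1:\cdots:\X_m}$ with $\dim\X_i\leq r$ that generate $P$. Because generating sets are unique up to isometry, the $x_a$'s satisfy the same linear dependencies as the $v_a$'s: any $r+1$ are linearly independent, so $V:=\spn\{x_a\}$ is an $(r+1)$-dimensional subspace of $\X_1\otimes\cdots\otimes\X_m$ containing $2r+1$ product vectors in general linear position. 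The central step then invokes Lemma~\ref{l1} pairwise: for each $a\neq b$, either $\spn(x_a,x_b)\setminus\{0\}\subset\pro{\X_1:\cdots:\X_m}$ (when $x_a,x_b$ differ on at most one subsystem) or the only product vectors in $\spn(x_a,x_b)$ are scalar multiples of $x_a$ or $x_b$ (otherwise). Combined with the bound $\dim\X_i\leq r$, this should rule out such a configuration for a suitably chosen $\{v_a\}$.

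I expect the main obstacle to be this final step, where the flavor of the argument depends on $r$. For $r\geq 3$, the argument closes via a dimension count: the Segre variety of product vectors in $\complex^r\otimes\complex^r$ has codimension $(r-1)^2\geq 4$, so a generic $(r+1)$-dimensional subspace contains no product vectors at all, forbidding any realization of generic $\{v_a\}$ as a product configuration in any tensor product with subsystems of dimension $\leq r$. For $r=2$ the situation is more delicate, because a generic $3$-dimensional subspace of $\complex^2\otimes\complex^2$ intersects the rank-$1$ variety in a conic containing infinitely many product vectors; here the counterexample $\{v_a\}$ must be chosen so that the system of five rank-$1$ equations $\det(Uv_a)=0$ on the $15$-real-dimensional Stiefel manifold of isometries $U:\complex^3\hookrightarrow\complex^2\otimes\complex^2$ has no solution. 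A specific such $\{v_a\}$, together with an elimination against larger tensor products (where the intersection of a $3$-dimensional subspace with the Segre is generically empty), yields the desired counterexample.
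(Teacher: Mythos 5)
Your reduction to $n=2r+1$ and your $r=1$ argument are both correct and match the paper. The gap is in the heart of the $r\geq 2$ case: you never actually produce the counterexample or close the contradiction. Choosing $v_1,\dots,v_{2r+1}$ in general linear position gives you nothing to feed into Lemma~\ref{l1} in its productive direction: that lemma only constrains a pair $(x_a,x_b)$ when some nontrivial combination $\alpha_1 x_a+\alpha_2 x_b$ is itself forced to be a product vector, and with vectors in general position no such pairwise dependency exists among your $2r+1$ points, so the pairwise application yields no information. The paper's construction is engineered precisely to create these dependencies: it takes $r+1$ linearly independent vectors $v_1,\dots,v_{r+1}$ satisfying $\abs{\ip{v_a}{v_{a+2}}}>\abs{\ip{v_a}{v_{a+1}}}\cdot\abs{\ip{v_{a+1}}{v_{a+2}}}$ and appends the $r$ normalized sums $\alpha_a v_a+\beta_{a+1}v_{a+1}$. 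Lemma~\ref{l1} then forces each consecutive pair $x_a,x_{a+1}$ to be non-parallel in exactly one subsystem $i_a$; since each subsystem has dimension $\leq r$ while $\spn\{x_1,\dots,x_{r+1}\}$ has dimension $r+1$, some $a$ must have $i_a\neq i_{a+1}$; and the multiplicativity $\abs{\ip{x_a}{x_{a+2}}}=\prod_i\abs{\ip{x_{a,i}}{x_{a+2,i}}}$ then gives $\abs{\ip{v_a}{v_{a+2}}}\leq\abs{\ip{v_a}{v_{a+1}}}\cdot\abs{\ip{v_{a+1}}{v_{a+2}}}$, a contradiction. None of these three ingredients appears in your sketch.

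Your fallback via the Segre variety also does not work as stated. The subspace $V=\spn\{x_a : a\in[2r+1]\}$ is not a generic $(r+1)$-dimensional subspace of $\X_1\otimes\cdots\otimes\X_m$ --- it is spanned by product vectors by hypothesis, so it automatically meets the Segre variety, and the codimension count for generic subspaces rules nothing out. A genericity argument would instead have to bound the dimension of the set of Gram matrices realizable by product configurations, uniformly over the unbounded number of subsystems $m$ and all dimension patterns, and even then it would be non-constructive; you explicitly leave the $r=2$ case unresolved. As it stands, the proposal establishes the theorem only for $r=1$.
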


\begin{proof}
For $r=1$, the statement follows easily from the fact that the Schur product of any two rank-one correlation matrices is again rank one (see the proof of Lemma~\ref{closed}). Assume $r \geq 2$. We find a correlation matrix $P\in \corr{\complex^{2r+1}}$ that is not contained in $\corrr{r}{\complex^{2r+1}}$. This will prove the claim, as it implies that any correlation matrix in $\corr{\complex^{n}}$ with principal submatrix $P$ is not $r$-decomposable.

Let ${v_1},\dots,{v_{r+1}}$ be any linearly independent collection of unit vectors for which
\begin{align}\label{assumption}
\abs{\ip{v_a}{v_{a+2}}}>\abs{\ip{v_a}{v_{a+1}}}\cdot \abs{\ip{v_{a+1}}{v_{a+2}}}.
\end{align}
For example, one could choose any $p \in (0,1)$ and let $\ip{v_a}{v_b}=p$ for all ${a\neq b \in [r+1]}$. Let $\alpha_1,\dots, \alpha_r, \beta_2, \dots, \beta_{r+1} \in \complex \setminus \{0\}$ be any collection of non-zero scalars subject to the constraint that for all $a \in [r]$ it holds that ${\norm{\alpha_a v_a+\beta_{a+1} v_{a+1}}=1}$, and let $P$ be the correlation matrix generated by
\begin{align}
\{{v_1},\dots,{v_{r+1}},\alpha_1 {v_1}+\beta_2 {v_2},\alpha_2 {v_2}+\beta_3 {v_3}, \dots, \alpha_r v_{r}+\beta_{r+1}{v_{r+1}}\}.
\end{align}
Note that $\rank(P)=r+1$. For notational convenience, we extend the definition of ${v_a}$ to denote the $a$-th vector in this set for each $a\in [2r+1]$.
%

We proceed by contradiction. The existence of an $r$-decomposition of $P$ is equivalent to the existence of a positive integer $m$, complex Euclidean spaces ${\mathcal{X}_1, \dots, \mathcal{X}_m \iso \complex^{r}}$, and unit product vectors $\{{x_a}: a \in [2r+1]\} \subset \proS{\X_1 : \dots : \X_m}$ such that ${\ip{x_a}{x_b}=\ip{v_a}{v_b}}$ for all $a,b\in[2r+1]$. By Lemma~\ref{l1}, this implies that for each $a \in [r]$,
\begin{align}\label{index}
\dim \spn \{{x_{a, i}}, {x_{a+1,i}}\}=2
\end{align}
for at most a single index $i \in [m]$. Furthermore, such an index indeed exists for every $a \in [r]$, since for all $a \in [r]$,
\begin{align}
\dim\spn\{{x_a},{x_{a+1}}\}=\dim\spn\{{v_a},{v_{a+1}}\}=2.
\end{align}
For each $a \in [r]$, fix $i_a \in [m]$ to denote the unique index that satisfies~\eqref{index}. Since $\abs{\ip{x_{a,i}}{x_{a+1,i}}}=1$ for all $i \neq i_a$, it must hold that $\abs{\ip{x_{a,i_a}}{x_{a+1,i_a}}}=\abs{\ip{v_{a}}{v_{a+1}}}$ for all $a \in [r]$. Note that
\begin{align}
\dim\spn \{ {x_{a,i}}: a \in [2r+1]\}\leq r
\end{align}
for all $i \in [m]$, and
\begin{align}
\dim\spn \{ {x_1},\dots,{x_{r+1}} \}=\dim\spn \{ {v_1},\dots,{v_{r+1}} \}=r+1,
\end{align}
so there must exist an index $a \in [r-2]$ such that $i_a \neq i_{a+1}$. Fix $a$ to denote one such index. Note that
\begin{align}
\abs{\ip{x_{a,i_a}}{x_{a+1,i_a}}}&=\abs{\ip{v_{a}}{v_{a+1}}}\\
\abs{\ip{x_{a+1,i_{a+1}}}{x_{a+2,i_{a+1}}}}&=\abs{\ip{v_{a+1}}{v_{a+2}}}\\
\abs{\ip{x_{a,i_{a+1}}}{x_{a+1,i_{a+1}}}}&=1\\
\abs{\ip{x_{a+1,i_{a}}}{x_{a+2,i_{a}}}}&=1,
\end{align}
from which it follows that
\begin{align}
\abs{\ip{x_{a,i_a}}{x_{a+2,i_a}}}&=\abs{\ip{v_{a}}{v_{a+1}}}\\
\abs{\ip{x_{a,i_{a+1}}}{x_{a+2,i_{a+1}}}}&=\abs{\ip{v_{a+1}}{v_{a+2}}},
\end{align}
but this implies
\begin{align}
\abs{\ip{v_{a}}{v_{a+2}}}&=\abs{\ip{x_{a}}{x_{a+2}}}\\
							&=\prod_{i=1}^m \abs{\ip{x_{a,i}}{x_{a+2,i}}}\\
							&\leq \abs{\ip{x_{a,i_a}}{x_{a+2,i_a}}} \cdot \abs{\ip{x_{a,i_{a+1}}}{x_{a+2,i_{a+1}}}}\\
							&=\abs{\ip{v_{a}}{v_{a+1}}}\cdot \abs{\ip{v_{a+1}}{v_{a+2}}},
\end{align}
a contradiction to~\eqref{assumption}. This completes the proof.
\end{proof}

\begin{cor}\label{notconvex}
For all integers $r\geq 1$ and $n \geq 2r+1$, $\corrr{r} {\complex^{n}}$ is not convex.
\end{cor}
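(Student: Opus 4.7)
The plan is to combine Theorem~\ref{counterexample} with the observation that every correlation matrix of rank at most $r$ is trivially contained in $\corrr{r}{\complex^n}$ (as the one-term Schur product consisting of itself). Consequently, if $\corrr{r}{\complex^n}$ were convex it would contain the convex hull of all rank-$\leq r$ correlation matrices; I would complete the argument by exhibiting a matrix that lies in this convex hull but outside $\corrr{r}{\complex^n}$.

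For $r=1$ I would argue directly. Since $(vv^*) \odot (ww^*) = (v \odot w)(v \odot w)^*$ is again rank one whenever $v,w$ have unit-modulus entries, the set $\corrr{1}{\complex^n}$ is exactly the set of rank-one correlation matrices, and for $n \geq 3$ the average of any two distinct such matrices has rank two, immediately giving non-convexity.

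For $r \geq 2$ I would first reduce to the case $n = 2r+1$ via the embedding $A \mapsto A \oplus I_{n-2r-1}$. Given a decomposition $A = R_1 \odot \cdots \odot R_m$ with $\rank(R_i) \leq r$, I extend each $R_i$ to a rank-$\leq r$ correlation matrix on $\complex^n$ by padding the new rows with copies of a fixed generating vector of $R_i$, and then multiply by rank-two auxiliary Schur factors $T_c$, one for each new index $c \in \{2r+2,\dots,n\}$, where $T_c$ is the correlation matrix generated by $e_1$ at index $c$ and $e_0$ at every other index. The $T_c$'s act trivially on the $A$-block and zero out all cross entries, yielding $A \oplus I_{n-2r-1} \in \corrr{r}{\complex^n}$. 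Any Schur decomposition of $(\lambda A + (1-\lambda)B) \oplus I_{n-2r-1}$ restricts to one of $\lambda A + (1-\lambda)B$, so non-convexity at $n = 2r+1$ propagates. For the base case $n = 2r+1$, I would invoke the classical bound (due to Li and Tam) that every extreme point of $\corr{\complex^m}$ has rank at most $\lfloor \sqrt{m}\rfloor$. Since $\lfloor \sqrt{2r+1} \rfloor \leq r$ for all $r \geq 1$, every extreme point of $\corr{\complex^{2r+1}}$ lies in $\corrr{r}{\complex^{2r+1}}$, and Krein-Milman then forces a convex $\corrr{r}{\complex^{2r+1}}$ to equal all of $\corr{\complex^{2r+1}}$, contradicting Theorem~\ref{counterexample}.

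The main obstacle is the rank bound on extreme correlation matrices, which is a classical but external result. If one wished to avoid it, an alternative would be to exhibit the explicit counterexample $P$ from the proof of Theorem~\ref{counterexample} as a concrete convex combination of rank-$\leq r$ correlation matrices, by decomposing the associated Gram factor into two isometric truncations, but this route appears substantially more technical.
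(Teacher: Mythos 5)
Your proposal is correct and rests on exactly the same key fact as the paper's proof: the Li--Tam bound that extreme points of $\corr{\complex^{m}}$ have rank at most $\floor{\sqrt{m}}\leq r$ for $m=2r+1$, combined with the non-$r$-decomposable matrix from Theorem~\ref{counterexample} (the paper invokes Carath{\'e}odory to write that matrix as a finite convex combination of extreme points, which is interchangeable with your Minkowski/Krein--Milman phrasing in finite dimension). The only real divergence is the passage to general $n$: the paper extends each extreme point $R_i$ to a rank-preserving $R_i'\in\corr{\complex^n}$ and sums with the same weights, whereas you embed via $A\mapsto A\oplus \I$ with explicit rank-two padding factors $T_c$; both are valid, and yours has the small side benefit of showing directly that $r$-decomposability is preserved under direct sum with the identity.
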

\begin{proof}
We first prove that $\corrr{r}{\complex^{2r+1}}$ is not convex. Let $P \in \corr{\complex^{2r+1}} \setminus \corrr{r}{\complex^{2r+1}}$ be any correlation matrix constructed in Theorem~\ref{counterexample}. Since $\corr{\complex^{2r+1}}$ is contained in a real affine space of dimension $2r(2r+1)$, then by Carath{\'e}odory's theorem~\cite{rockafellar2015convex},
\begin{align}
P=\sum_{i=1}^s p(i) R_i
\end{align}
for some positive integer $s \leq 2r(2r+1)+1$, probability vector $p$, and extreme point correlation matrices $R_i$. By Corollary 2 in \cite{doi:10.1137/S0895479892240683}, $\rank(R_i) \leq \floor{\sqrt{2r+1}} \leq r$ for all $i \in [s]$. It follows that $\corrr{r}{\complex^{2r+1}}$ is not convex, since each $R_i$ is $r$-decomposable and $P$ is not.

For the general statement, let $n \geq 2r+1$ be any integer. For each $i \in [s]$, let ${R_i' \in \corr{\complex^n}}$ be any correlation matrix with $\rank(R_i')=\rank(R_i)\leq r$ that contains $R_i$ as the upper-left principal submatrix. Then
\begin{align}
P':=\sum_{i=1}^s p(i) R_i' \in \corr{\complex^n}
\end{align}
contains $P$ as the upper-left principal submatrix, so $P'$ is not $r$-decomposable. As before, it follows that $\corrr{r}{\complex^{n}}$ is not convex, since each $R'_i$ is $r$-decomposable and $P'$ is not.
\end{proof}

Now we apply our results to the following entanglement detection scenario. Say we are given many copies of unknown pure states $v_1 v_1^*, \dots, v_n v_n^*,$ with $v_1,\dots, v_n \in \S(\X)$ for an unknown complex Euclidean space $\X$. Suppose further that we are allowed to perform any of the measurements 
\begin{align}
\{v_1 v_1^*, \I-v_1 v_1^*\}, \dots, \{v_n v_n^*, \I-v_n v_n^*\}
\end{align}
on any of the states $v_1 v_1^*, \dots, v_n v_n^*$, and we wish to detect entanglement in the following sense. For some positive integer $r$, we wish to detect that for any complex Euclidean space $\X$, any set of unit vectors $v_1,\dots, v_n \in \S(\X)$ that are consistent with the measurement outcomes observed in the above scenario, and any decomposition $\X=\X_1 \otimes \dots \otimes \X_m$ of $\X$ into spaces of dimension $\dim(\X_i)\leq r$, at least one of the vectors  $v_1,\dots, v_n$ must be entangled.

In the above scenario, the only meaningful information that can be gained from the measurement outcomes is precisely the Gram matrix of $\{v_1 v_1^*, \dots, v_n v_n^*\}$ (the matrix of inner products $\ip{v_a v_a^*}{v_b v_b^*}$ for $a,b \in \{1,\dots, n\}$). Note that a correlation matrix $R$ is the Gram matrix of rank-one projectors if and only if $R=P \odot \overline{P}$ for some correlation matrix $P$. The above scenario is therefore equivalent to being given some correlation matrix $R$ that is the Gram matrix of rank-one projectors, and wishing to detect that for any correlation matrix $P$, if $R=P \odot \overline{P}$, then $P$ is not $r$-decomposable. In Proposition~\ref{counterexamplecor} we find examples of such entanglement detection.

\begin{prop}\label{counterexamplecor}
For any integer $r\geq 1$ and real number $0<p<1$, there exists a correlation matrix arising from a set of $2r+1$ unit vectors
\begin{align}
\{{v_1},\dots,{v_{r+1}},{v_{(1,2)}},{v_{(2,3)}},\dots,{v_{(r,r+1)}}\}
\end{align}
such that for all $a \neq b \in [r+1]$,
\begin{align}
\abs{\ip{v_a}{v_b}}^2=p^2,
\end{align}
and for all $a \in [r]$,
\begin{align}
\abs{\ip{v_a}{v_{(a,a+1)}}}^2=\abs{\ip{v_{a+1}}{v_{(a,a+1)}}}^2=\frac{1+p}{2}.
\end{align}
Furthermore, any such correlation matrix with $0<p<\frac{1}{r}$ is not $r$-decomposable.
\end{prop}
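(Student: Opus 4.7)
The plan has two parts. For \textbf{existence}, I would exhibit explicit unit vectors in $\complex^{r+1}$: set $v_a = \sqrt{p}\tinyspace e_0 + \sqrt{1-p}\tinyspace e_a$ for $a \in [r+1]$, so that $\ip{v_a}{v_b}=p$ whenever $a \neq b \in [r+1]$, and set $v_{(a,a+1)}=(v_a+v_{a+1})/\sqrt{2(1+p)}$ for $a \in [r]$, so that $\ip{v_a}{v_{(a,a+1)}}=\ip{v_{a+1}}{v_{(a,a+1)}}=\sqrt{(1+p)/2}$. The resulting Gram matrix is a correlation matrix with the required magnitudes.

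For \textbf{non-decomposability}, I would argue by contradiction. Assume that some correlation matrix $P$ arising from such vectors is $r$-decomposable. Then $P$ is generated by unit product vectors $\{x_a : a \in [r+1]\} \cup \{y_a : a \in [r]\}$ in $\proS{\X_1 : \dots : \X_m}$ with $\dim \X_i \leq r$, which satisfy the same magnitude constraints as the $v$'s. The crucial first step is to show that each $y_a$ lies in $V_a:=\spn\{x_a, x_{a+1}\}$. For this I would observe that the Hermitian operator $x_a x_a^* + x_{a+1} x_{a+1}^*$ has largest eigenvalue $1+|\ip{x_a}{x_{a+1}}|=1+p$, with top eigenspace contained in $V_a$. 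Since $|\ip{x_a}{y_a}|^2+|\ip{x_{a+1}}{y_a}|^2=1+p$ saturates this bound, $y_a \in V_a$, and writing $y_a=\alpha_a x_a + \beta_{a+1} x_{a+1}$ we have $\alpha_a, \beta_{a+1}\neq 0$ (else $y_a$ would be parallel to $x_a$ or $x_{a+1}$, contradicting $\sqrt{(1+p)/2}>p$).

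From here the proof closely mirrors that of Theorem~\ref{counterexample}. Lemma~\ref{l1} applied to the non-trivial expression of the product vector $y_a$ as $\alpha_a x_a + \beta_{a+1} x_{a+1}$ gives a unique index $i_a \in [m]$ with $\dim \spn\{x_{a,i_a}, x_{a+1,i_a}\}=2$ (at most one such index by the lemma, at least one because $|\ip{x_a}{x_{a+1}}|=p<1$ prevents $x_a$ from being parallel to $x_{a+1}$). Gershgorin's circle theorem applied to the Gram matrix of $\{x_a : a \in [r+1]\}$ (diagonal $1$, off-diagonal magnitudes $p$, row sums of off-diagonals equal to $rp<1$) forces linear independence, so $\dim\spn\{x_a : a \in [r+1]\}=r+1$. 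If all $i_a$'s coincided at a common index $i^*$, the unit vectors $\{x_{a,i}: a\in[r+1]\}$ would be pairwise parallel for every $i \neq i^*$, forcing the $x_a$'s into a subspace of dimension $\leq \dim \X_{i^*} \leq r$, a contradiction; a chain argument then produces some $a \in [r-1]$ with $i_a \neq i_{a+1}$.

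Finally, I would factor $|\ip{x_a}{x_{a+2}}|=\prod_i |\ip{x_{a,i}}{x_{a+2,i}}|$: for $i \notin \{i_a, i_{a+1}\}$, both pairs $(x_a,x_{a+1})$ and $(x_{a+1},x_{a+2})$ are parallel at index $i$, so the factor equals $1$; for $i \in \{i_a, i_{a+1}\}$, parallelism in one of these two pairs at index $i$ reduces the factor to either $|\ip{x_a}{x_{a+1}}|=p$ or $|\ip{x_{a+1}}{x_{a+2}}|=p$, giving $|\ip{x_a}{x_{a+2}}|=p^2$, which contradicts the required value $|\ip{x_a}{x_{a+2}}|=p$. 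I expect the main obstacle to be the eigenvalue-saturation argument placing $y_a$ inside $V_a$: the proposition only supplies magnitudes of inner products rather than the explicit linear relations available in Theorem~\ref{counterexample}, and it is the precise value $(1+p)/2$ that converts the magnitude hypothesis into the geometric fact enabling Lemma~\ref{l1}.
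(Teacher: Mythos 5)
Your proof is correct and follows the paper's overall skeleton --- the same explicit construction for existence, Gershgorin's circle theorem for linear independence of $\{v_1,\dots,v_{r+1}\}$ under $p<\tfrac{1}{r}$, and then the chain argument of Theorem~\ref{counterexample} (Lemma~\ref{l1}, dimension counting to find $i_a\neq i_{a+1}$, and the multiplicative factorization of $\abs{\ip{x_a}{x_{a+2}}}$) --- but you handle the pivotal step by a genuinely different argument. That step is exactly the one you flag as the main obstacle: converting the magnitude-only hypothesis $\abs{\ip{v_a}{v_{(a,a+1)}}}^2=\abs{\ip{v_{a+1}}{v_{(a,a+1)}}}^2=\tfrac{1+p}{2}$ into the relation $v_{(a,a+1)}=\alpha_a v_a+\beta_{a+1}v_{a+1}$ with both coefficients nonzero. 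The paper writes the $3\times 3$ principal Gram matrix of $\{v_a,v_{a+1},v_{(a,a+1)}\}$ with arbitrary phases $\phi_1,\phi_2,\phi_3$, computes $\det(P^{(a,a+1)})=p(1+p)(\cos(\phi_1-\phi_2+\phi_3)-1)\leq 0$, and concludes from positive semidefiniteness that the determinant vanishes and the rank is two. You instead note that $\abs{\ip{x_a}{y_a}}^2+\abs{\ip{x_{a+1}}{y_a}}^2=1+p$ saturates the largest eigenvalue $1+\abs{\ip{x_a}{x_{a+1}}}$ of $x_ax_a^*+x_{a+1}x_{a+1}^*$, which pins $y_a$ to the top eigenspace inside $\spn\{x_a,x_{a+1}\}$. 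Both arguments are phase-independent, as they must be, and both isolate why $\tfrac{1+p}{2}$ is the right threshold; yours is more geometrically transparent (it even shows $y_a$ is determined up to phase, since the top eigenspace is one-dimensional for $p>0$), while the paper's is a self-contained determinant computation needing no spectral input. Two cosmetic differences: the paper dispatches $r=1$ separately via the rank-one Schur product fact, whereas in your version the contradiction for $r=1$ already falls out of the ``all $i_a$ coincide'' branch; and you obtain the exact value $\abs{\ip{x_a}{x_{a+2}}}=p^2$ where the paper settles for the inequality $\leq p^2$. Neither affects correctness.
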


\begin{proof}
We first prove the existence of such a correlation matrix. The correlation matrix generated by the set of unit vectors
\begin{align}\label{unitvectors}
\left\{{v_1},\dots,{v_{r+1}},\frac{1}{\sqrt{2(1+p)}}({v_1}+{v_2}), \dots, \frac{1}{\sqrt{2(1+p)}}({v_{r}}+{v_{r+1}}) \right\},
\end{align}
with $\ip{v_a}{v_b}=p$ for all $a\neq b \in [r+1]$, satisfies the desired conditions. Indeed,
\begin{align}
\biggip{v_a}{\frac{1}{\sqrt{2(1+p)}}({v_a}+{v_{a+1}})}&=\frac{1}{\sqrt{2(1+p)}}(1+p)\\
&=\sqrt{\frac{1+p}{2}},
\end{align}
and similarly,
\begin{align}
\biggip{v_{a+1}}{\frac{1}{\sqrt{2(1+p)}}(v_a+v_{a+1})}=\sqrt{\frac{1+p}{2}}.
\end{align}

Now we prove that any such correlation matrix with $0<p<\frac{1}{r}$ is not $r$-decomposable. For $r=1$, the statement follows easily from the fact that the Schur product of any two rank-one correlation matrices is again rank one (see the proof of Lemma~\ref{closed}), and that for all $0<p<1$, any correlation matrix satisfying the conditions of the proposition has rank $\geq 2$. Assume $r \geq 2$. It is clear that
\begin{align}
\abs{\ip{v_a}{v_{a+2}}}>\abs{\ip{v_a}{v_{a+1}}}\cdot \abs{\ip{v_{a+1}}{v_{a+2}}}
\end{align}
for all $a \in [r-2]$. Thus, by the proof of Theorem~\ref{counterexample} it suffices to show that the vectors $\{{v_1},\dots,{v_{r+1}}\}$ are linearly independent, and that for all $a \in [r]$ it holds that ${{v_{(a,a+1)}}=\alpha_a {v_a}+ \beta_{a+1} {v_{a+1}}}$ for some non-zero scalars $\alpha_a, \beta_{a+1}\in \complex \setminus \{0\}$.

First, by Gershgorin's circle theorem~\cite{horn2013matrix}, the condition that $\abs{\ip{v_a}{v_b}}^2=p^2$ for all ${a \neq b \in [r+1]}$, along with $0<p<\frac{1}{r}$, implies that the vectors $\{{v_1},\dots,{v_{r+1}}\}$ are linearly independent. Second, for each $a \in [r]$ the principal submatrix of $P$ generated by the vectors $\{{v_a},{v_{a+1}},{v_{(a,a+1)}}\}$ is of the form
\begin{align}
P^{(a,a+1)}=
\begin{pmatrix}
1 & e^{i \phi_1} p & e^{i \phi_2} \sqrt{\frac{1+p}{2}} \\
e^{-i \phi_1} p & 1 & e^{i \phi_3} \sqrt{\frac{1+p}{2}} \\
e^{-i \phi_2} \sqrt{\frac{1+p}{2}} & e^{-i \phi_3} \sqrt{\frac{1+p}{2}} & 1
\end{pmatrix}
\end{align}
for some $\phi_1,\phi_2,\phi_3 \in [0,2\pi)$. Note that
\begin{align}
\det(P^{(a,a+1)})=p(1+p)(-1+\cos(\phi_1-\phi_2+\phi_3))\leq 0,
\end{align}
and since $P^{(a,a+1)}$ is positive semidefinite,
\begin{align}
\det(P^{(a,a+1)})=0.
\end{align}
This implies that $P^{(a,a+1)}$ has rank one or two. We can deduce $\rank(P^{(a,a+1)})\neq 1$ because $v_a$ and $v_{a+1}$ are linearly independent. Thus, $\rank(P^{(a,a+1)})= 2$, which implies ${{v_{(a,a+1)}}= \alpha_a{v_a}+ \beta_{a+1} {v_{a+1}}}$ for some scalars $\alpha_a, \beta_{a+1} \in \complex$, both of which must be non-zero because no entry in $P^{(a,a+1)}$ has unit magnitude.
\end{proof}

\section{Is the containment $\corrr{2}{\complex^4}\subseteq \corr{\complex^4}$ strict?}\label{4?}
Theorem~\ref{rankreduce} implies $\corrr{2}{\complex^3}=\corr{\complex^3}$, while Theorem~\ref{counterexample} implies $\corrr{2}{\complex^5}\subsetneq\corr{\complex^5}$. This leaves open the question of whether the containment ${\corrr{2}{\complex^4}\subseteq \corr{\complex^4}}$ is strict. For a correlation matrix $P \in \corr{\complex^4}$, it might seem possible that a $2$-decomposition~\eqref{decomp} exists only for large values of $m$, which could make our problem intractable. The following theorem allows us to restrict our attention to $m=2$.

\begin{theorem}\label{4equiv}
The following statements are equivalent:
\begin{enumerate}
\item $\corrr{2}{\complex^4}\subsetneq \corr{\complex^4}$.
\item There exists a correlation matrix $P \in \corr{\complex^4}$ such that $\rank (P)=3$, no vector in a generating set of $P$ is linearly independent from the rest, and $P$ is not $2$-decomposable into the Schur product of precisely two correlation matrices of rank $2$.
\end{enumerate}
\end{theorem}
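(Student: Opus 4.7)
The plan is to prove both directions of the equivalence.

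For (1) $\Rightarrow$ (2), I pick any $P' \in \corr{\complex^4} \setminus \corrr{2}{\complex^4}$. Since correlation matrices of rank at most $2$ are trivially $2$-decomposable (take $m = 1$), $\rank(P')$ is either $3$ or $4$. When $\rank(P') = 3$, I read Theorem~\ref{rankreduce} in the contrapositive: since $P'$ is not $2$-decomposable, no generator of $P'$ is linearly independent from the rest; and $P'$ cannot equal $R_1 \odot R_2$ with each $R_i$ of rank two (else $P'$ would be $2$-decomposable). Hence $P = P'$ satisfies statement (2). When $\rank(P') = 4$, the four generators are automatically linearly independent, so Theorem~\ref{rankreduce} yields $P' = R \odot Q$ with $\rank(R) = 3$ and $\rank(Q) = 2$; since $P'$ is not $2$-decomposable, neither is $R$, and the rank-three analysis applied to $R$ shows $P = R$ satisfies statement (2).

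For (2) $\Rightarrow$ (1), I assume $P$ satisfies statement (2) and derive a contradiction from supposing $P = R_1 \odot \cdots \odot R_m$ with each $\rank(R_i) \leq 2$. A rank-one correlation matrix equals $vv^*$ with $|v_a| = 1$ for all $a$, so Schur-multiplying by it amounts to conjugating by the diagonal unitary $\diag{v}$, which preserves both rank and the correlation property; I absorb every rank-one factor into a rank-two one and assume all $R_i$ have rank exactly $2$. The cases $m = 1$ and $m = 2$ immediately contradict the rank and Schur-product hypotheses on $P$, so $m \geq 3$.

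The main technical step writes $P$ as the Gram matrix of unit product vectors $x_a = v_{a,1} \otimes y_a \in \complex^2 \otimes (\complex^2)^{\otimes(m-1)}$, where $y_a = \otimes_{j \geq 2} v_{a,j}$. Since $\rank(P) = 3$ and no generator of $P$ is linearly independent from the rest, there is a linear relation $\sum_a c_a x_a = 0$ with all $c_a \neq 0$, which rearranges to $\sum_a c_a v_{a,1} \otimes y_a = 0$. Set $k = \dim\spn\{y_a : a \in [4]\}$; the inclusion $\spn\{x_a\} \subseteq \complex^2 \otimes \spn\{y_a\}$ together with $\dim\spn\{x_a\} = 3$ yields $3 \leq 2k$, so $k \geq 2$. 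If $k = 4$, linear independence of the $y_a$ forces each $c_a v_{a,1} = 0$, contradicting non-vanishing of $c_a$ and $v_{a,1}$. If $k = 3$, I pick a basis $y_1, y_2, y_3$ of $\spn\{y_a\}$, write $y_4 = \sum_{i=1}^3 \alpha_i y_i$, and use linear independence of $y_1, y_2, y_3$ to obtain $c_i v_{i,1} + c_4 \alpha_i v_{4,1} = 0$ for $i = 1, 2, 3$; any $\alpha_i = 0$ would force $v_{i,1} = 0$ (contradiction), while all $\alpha_i \neq 0$ forces every $v_{i,1}$ parallel to $v_{4,1}$, contradicting $\rank(R_1) = 2$. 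So $k = 2$, and the correlation matrices $R_1'$ generated by $\{v_{a,1}\}$ and $R_2'$ generated by $\{y_a\}$ are both of rank two with $P = R_1' \odot R_2'$, contradicting the hypothesis on $P$.

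The main obstacle is the dimension argument on $k$, particularly the careful case analysis when $k = 3$; the rest reduces to standard applications of Theorem~\ref{rankreduce}, absorption of rank-one factors, and the tensor-product characterization of $r$-decomposability established in the preliminaries.
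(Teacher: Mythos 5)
Your proof is correct, and while the $(1 \Rightarrow 2)$ direction essentially reproduces the paper's Lemma~\ref{lindep} (iterating the splitting $P' = R \odot Q$ from the proof of Theorem~\ref{rankreduce} until no generator is linearly independent from the rest, which for $n=4$ collapses to your two cases $\rank(P') \in \{3,4\}$), your $(2 \Rightarrow 1)$ direction takes a genuinely different route. The paper proves Lemma~\ref{closed}, which invokes Lemma~\ref{linincor} (Corollary 9 of \cite{tensor}, a nontrivial result on linear combinations of product vectors) to conclude that in any $2$-decomposition at most $n-2=2$ of the factors $R_i$ can have rank exceeding one, and then absorbs the rank-one factors. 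You instead bipartition the subsystems as $\X_1$ versus $\X_2 \otimes \cdots \otimes \X_m$, set $y_a = \otimes_{j \geq 2} v_{a,j}$, and run a direct dimension count on $k = \dim\spn\{y_a : a \in [4]\}$, eliminating $k \in \{3,4\}$ by hand from the full-support relation $\sum_a c_a v_{a,1} \otimes y_a = 0$ and $\rank(R_1)=2$. Your argument is elementary and self-contained, avoiding the external lemma entirely; what it gives up is generality --- the paper's lemma yields the stronger structural statement (valid for all $n$ and $r$) that at most $n-2$ of the \emph{original} factors are nontrivial, whereas you only produce a two-factor decomposition after regrouping, which is nonetheless exactly what the theorem requires. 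One small step you should spell out: the assertion that the relation $\sum_a c_a x_a = 0$ has all $c_a \neq 0$ follows because if some $c_a$ vanished, the remaining three vectors would be linearly dependent and hence span a plane, forcing $x_a$ to be linearly independent from the rest and contradicting the hypothesis on $P$; the paper needs (and likewise elides) the same observation when feeding its $P$ into Lemma~\ref{closed}.
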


Theorem~\ref{4equiv} shows that it suffices to consider rank-three correlation matrices for which no vector in a generating set is linearly independent from the rest. In Proposition~\ref{infinitefamily}, we construct $2$-decompositions of an infinite family of such correlation matrices, thus narrowing our question even further. We speculate that perhaps our construction can inspire a more general construction of all such correlation matrices.
\begin{prop}\label{infinitefamily}
Let $P \in \corr{\complex^4}$ be any correlation matrix generated by a set of unit vectors $\{v_a : a \in [4]\}$ such that there exists a real number $-1/2 < p <1$ for which $\ip{v_a}{v_b}=p$ for all $a\neq b \in [3]$, and there exist non-zero scalars ${\alpha_1, \alpha_2 \in \complex \setminus \{0\}}$ for which ${v_4=\alpha_1 (v_1+v_3)+\alpha_2 v_2}$. Then $P \in \corrr{2}{\complex^4}$.
\end{prop}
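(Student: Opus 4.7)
The plan is to exhibit an explicit $2$-decomposition of $P$ with only two factors: I will construct unit product vectors $\{x_a : a \in [4]\} \subset \proS{\X_1 : \X_2}$ with $\dim \X_1 = \dim \X_2 = 2$ whose Gram matrix is $P$.

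The condition $v_4 = \alpha_1(v_1 + v_3) + \alpha_2 v_2$, together with $\ip{v_1}{v_2}=\ip{v_3}{v_2}=p$, forces the symmetry $\ip{v_a}{v_1} = \ip{v_a}{v_3}$ for $a \in \{2, 4\}$. This motivates the ansatz
\begin{align*}
x_1 = e_0 \otimes g_1, \quad x_3 = e_0 \otimes g_3, \quad x_2 = f_2 \otimes e_0, \quad x_4 = f_4 \otimes e_0,
\end{align*}
in which the required symmetry is built in automatically: $x_2$ and $x_4$ share a common second factor $e_0$, so their inner products with $x_1$ and $x_3$ coincide as soon as $\ip{e_0}{g_1} = \ip{e_0}{g_3}$.

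I would then solve for the parameters in stages. Set $g_1 = \cos\theta\, e_0 + \sin\theta\, e_1$ and $g_3 = \cos\theta\, e_0 - \sin\theta\, e_1$ with $\cos 2\theta = p$, so $\ip{x_1}{x_3} = p$ and $\ip{e_0}{g_1} = \ip{e_0}{g_3} = \cos\theta = \sqrt{(1+p)/2}$. Take $f_2 = \mu\, e_0 + \nu\, e_1$ with $\mu = p\sqrt{2/(1+p)}$ (chosen so that $\mu\cos\theta = p$, matching $\ip{x_1}{x_2} = p$) and $\nu = \sqrt{1 - \mu^2}$. The hypothesis $-1/2 < p < 1$ guarantees $1 - \mu^2 = (1-p)(1+2p)/(1+p) > 0$, so $\nu$ is a positive real number. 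Finally, the two remaining conditions $\ip{x_4}{x_1} = \ip{v_4}{v_1}$ and $\ip{x_4}{x_2} = \ip{v_4}{v_2}$ form an invertible $2 \times 2$ linear system in the coordinates of $f_4$, which solves explicitly as $f_4 = \bigl(\alpha_1 \sqrt{2(1+p)} + \alpha_2 \mu\bigr)\, e_0 + \alpha_2 \nu\, e_1$.

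The only nontrivial step is verifying $\norm{f_4} = 1$, so that $x_4$ is indeed a unit product vector. I would expand $\norm{f_4}^2$ using the identities $\mu^2 + \nu^2 = 1$ and $\mu \sqrt{2(1+p)} = 2p$, then compare with the direct expansion $\norm{v_4}^2 = 2|\alpha_1|^2(1+p) + |\alpha_2|^2 + 4p \Re(\overline{\alpha_1} \alpha_2)$; the two match term by term, giving $\norm{f_4}^2 = \norm{v_4}^2 = 1$. The resulting decomposition $P = R_1 \odot R_2$, where $R_1$ and $R_2$ are the Gram matrices of $\{f_a : a \in [4]\}$ and $\{g_a : a \in [4]\}$ respectively (with $g_2 = g_4 = e_0$ and $f_1 = f_3 = e_0$), has both factors of rank at most $2$, proving $P \in \corrr{2}{\complex^4}$.
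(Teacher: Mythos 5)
Your construction is correct and is essentially the same as the paper's: your two factors (the Gram matrices of $\{f_a\}$ and $\{g_a\}$) coincide with the paper's $Q_2$ and $Q_1$ respectively, just written via a different but isometrically equivalent choice of generating vectors, and the normalization check for $f_4$ matches the paper's verification for its $q_{2,4}$.
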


In the remainder of this section, we prove Theorem~\ref{4equiv} and Proposition~\ref{infinitefamily}. For Theorem~\ref{4equiv}, $(1 \Rightarrow 2)$ will follow from Lemma~\ref{lindep}, and $(2 \Rightarrow 1)$ will follow from Lemma~\ref{closed}. We now prove these lemmas.

\begin{lemma}\label{lindep}
For all integers $n\geq 3$ and $2 \leq r \leq n-1$, if $\hspace{.1em} \corrr{r}{\complex^n}\subsetneq \corr{\complex^n}$, then there exists a correlation matrix $P \in \corr{\complex^n}\setminus \corrr{r}{\complex^n}$ such that no vector in a generating set of $P$ is linearly independent from the rest.
\end{lemma}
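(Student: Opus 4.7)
The plan is to establish the lemma by choosing a minimum-rank counterexample. Assuming $\corrr{r}{\complex^n}\subsetneq \corr{\complex^n}$, the set $\corr{\complex^n}\setminus \corrr{r}{\complex^n}$ is nonempty, and since rank takes values in $\{0,1,\dots,n\}$, I can fix some $P$ in this set of minimum possible rank. I would observe at the outset that $\rank(P) \geq 3$: any correlation matrix of rank at most $r$ is trivially $r$-decomposable via a single-factor Schur product, and since $r \geq 2$, this rules out $\rank(P) \leq 2$, contradicting $P \notin \corrr{r}{\complex^n}$. My aim is then to show that $P$ itself is the correlation matrix promised by the lemma.

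Suppose for contradiction that some generating set of $P$ contains a vector linearly independent from the rest. As noted in Section~\ref{mp}, any two generating sets of $P$ are related by an isometry, which preserves linear dependence, so this hypothesis is in fact a property of $P$ alone. Since $\rank(P) \geq 3$, Theorem~\ref{rankreduce} applies and yields a decomposition $P = R \odot Q$ with $R, Q \in \corr{\complex^n}$, $\rank(R) = \rank(P) - 1$, and $\rank(Q) = 2$. Since $\rank(Q) \leq r$, $Q$ is trivially $r$-decomposable. If $R$ were also $r$-decomposable, then concatenating its Schur-product decomposition with that of $Q$ would exhibit $P$ as $r$-decomposable, a contradiction. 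Hence $R \in \corr{\complex^n}\setminus \corrr{r}{\complex^n}$, but $\rank(R) < \rank(P)$, contradicting the minimality of $\rank(P)$. This forces $P$ to have no generating vector linearly independent from the rest, completing the proof.

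I expect no substantial obstacle here: the argument is essentially a clean application of Theorem~\ref{rankreduce} combined with the standard minimum-rank trick, and the reduction from an arbitrary counterexample to one of the desired shape happens in a single step. The most delicate conceptual point is the well-definedness of the property ``no generating vector is linearly independent from the rest'', which could a priori depend on the chosen generating set; but the isometric relation between any two generating sets of $P$, together with the fact that isometries are injective linear maps, resolves this immediately.
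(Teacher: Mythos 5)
Your proof is correct and takes essentially the same approach as the paper: both peel off a rank-$2$ factor via the decomposition $P = R \odot Q$ from the proof of Theorem~\ref{rankreduce} and use the strict drop in rank to terminate. The only difference is packaging --- you phrase it as a minimal-rank counterexample while the paper iterates the reduction explicitly --- and your version has the minor advantage of handling the base case $\rank(P)\geq 3$ cleanly up front.
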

\begin{proof}
By assumption, there exists $P \in \corr{\complex^n}$ that is not $r$-decomposable. If there exists a vector in a generating set of $P$ that is linearly independent from the rest, then by the proof of Theorem~\ref{rankreduce} there exists a decomposition $P=Q \odot R$ where $\rank(Q)=2$ and $\rank(R)= \rank(P)-1$. If there exists a vector in a generating set of $R$ that is linearly independent from the rest, then this process can be repeated until we have  a decomposition
\begin{align}
P=Q_1 \odot \dots \odot Q_m \odot R'
\end{align}
for which each $Q_i$ has rank $2$ and no vector in a generating set of $R'$ is linearly independent from the rest. Furthermore, $R'$ is not $r$-decomposable, for otherwise $P$ would be $r$-decomposable.
\end{proof}

To prove Lemma~\ref{closed}, we require the following lemma proven by the author in~\cite{tensor}. We note that this lemma holds more generally over an arbitrary field.
\begin{lemma}[Corollary 9 in \cite{tensor}]\label{linincor}
\sloppy
Let $n$ and $m$ be positive integers, let $\mathcal{X}_1, \dots, \mathcal{X}_m$ be complex Euclidean spaces, and let $\{{x_a}: a \in [n] \}\subset \pro{\mathcal{X}_1 : \dots : \mathcal{X}_m}$ be a set of linearly independent product vectors. If there exist non-zero scalars $\alpha_1, \dots, \alpha_n \in \complex \setminus \{0\}$ such that
\begin{align}
\sum_{a \in [n]}  \alpha_a {x_a}\in \pro{\mathcal{X}_1 : \dots : \mathcal{X}_m},
\end{align}
then the vectors $x_1,\dots, x_n$ are non-parallel in at most $n-1$ subsystems, i.e. ${\dim \spn \{ {x_{a, j}}: a \in [n]\} >1}$ for at most $n-1$ indices $j \in [m]$.
\end{lemma}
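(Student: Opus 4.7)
The plan is to exhibit an explicit $2$-decomposition by constructing four unit product vectors $x_a = y_a \otimes z_a \in \complex^2 \otimes \complex^2$ (for $a \in [4]$) whose Gram matrix equals $P$, which by the preliminaries suffices to show $P \in \corrr{2}{\complex^4}$. The ansatz is motivated by the symmetry $\ip{v_1}{v_b} = \ip{v_3}{v_b}$ for $b \in \{2, 4\}$, a direct consequence of the hypothesis $v_4 = \alpha_1(v_1 + v_3) + \alpha_2 v_2$; the idea is to concentrate all information distinguishing $v_1$ from $v_3$ in the first subsystem and force the second subsystem to treat them identically.

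Concretely, writing $q = \sqrt{(1+p)/2}$, $q' = \sqrt{(1-p)/2}$, $\eta = \sqrt{(1+2p)(1-p)/(1+p)}$, and $A = \alpha_1(1+p) + \alpha_2 p$, I would take
\begin{align*}
y_1 &= q\, e_0 + q'\, e_1, \qquad y_3 = q\, e_0 - q'\, e_1, \qquad y_2 = y_4 = e_0,\\
z_1 &= z_3 = e_0, \qquad z_2 = (p/q)\, e_0 + \eta\, e_1, \qquad z_4 = (A/q)\, e_0 + \alpha_2 \eta\, e_1.
\end{align*}
The hypothesis $-1/2 < p < 1$ is precisely what makes $q$, $q'$, and $\eta$ well-defined positive reals. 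These choices are essentially forced by the constraints: the symmetry $z_1 = z_3$ combined with $\ip{v_1}{v_4} = \ip{v_3}{v_4}$ forces $y_4 \in \spn\{e_0\}$; the vector $z_2$ is determined (up to phase) by the equation $\ip{y_1}{y_2}\ip{z_1}{z_2} = p$; and the coefficients of $z_4$ are determined by matching $\ip{x_1}{x_4} = A = \ip{v_1}{v_4}$ and $\ip{x_2}{x_4} = 2\alpha_1 p + \alpha_2 = \ip{v_2}{v_4}$.

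All required inner products and normalizations then reduce to routine substitution, except for the unit norm of $z_4$, which is the main technical point. Expanding $\|z_4\|^2 = |A|^2/q^2 + |\alpha_2|^2 \eta^2$ and using the identity $1 + p - 2p^2 = (1+2p)(1-p)$, the expression collapses to $2|\alpha_1|^2(1+p) + |\alpha_2|^2 + 4p\operatorname{Re}(\overline{\alpha_1}\alpha_2)$, which is exactly $\ip{v_4}{v_4}$ expanded in the generating set $\{v_1, v_2, v_3\}$; this equals $1$ by hypothesis. Hence the $x_a$'s form the desired generating set of unit product vectors, proving $P \in \corrr{2}{\complex^4}$.
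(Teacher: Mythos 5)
Your proposal does not prove the statement at hand. The statement is Lemma~\ref{linincor} (Corollary~9 of \cite{tensor}), a general structural claim about tensors: if $n$ \emph{linearly independent} product vectors $x_1,\dots,x_n \in \pro{\X_1:\dots:\X_m}$ admit a linear combination with \emph{all} coefficients non-zero that is again a product vector, then $\dim\spn\{x_{a,j} : a \in [n]\} > 1$ for at most $n-1$ of the $m$ indices $j$. What you have written is instead a proof of Proposition~\ref{infinitefamily}: an explicit $2$-decomposition of the specific family of $4\times 4$ rank-three correlation matrices with $\ip{v_a}{v_b}=p$ for $a \neq b \in [3]$ and $v_4 = \alpha_1(v_1+v_3)+\alpha_2 v_2$. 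Nothing in your construction of the vectors $y_a \otimes z_a$ bears on the lemma: you never engage with general $n$ and $m$, with the linear-independence hypothesis, or with the count of non-parallel subsystems, and an explicit example in $\complex^2 \otimes \complex^2$ cannot establish a universally quantified statement. The mismatch is total, not a fixable step.

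For what it is worth, as a proof of Proposition~\ref{infinitefamily} your construction checks out (I verified the inner products, the normalization of $z_4$ via $1+p-2p^2=(1+2p)(1-p)$, and the role of $-1/2<p<1$), and it differs mildly from the paper's own decomposition $P = Q_1 \odot Q_2$, though in the same spirit of exhibiting unit product vectors whose Gram matrix is $P$. But the lemma you were asked to prove is of a different character entirely; the paper itself does not prove it, citing \cite{tensor} instead. A genuine proof would have to argue about when sums of product vectors remain product vectors — for instance by induction on $n$, with the two-vector case supplied by Lemma~\ref{l1}, showing that non-parallelism in $n$ or more subsystems forces the sum $\sum_a \alpha_a x_a$ with all $\alpha_a \neq 0$ to be entangled. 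Your proposal contains no such argument.
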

\begin{lemma}\label{closed}
For any integer $n \geq 3$, let $P \in \corr{\complex^n}$ be any correlation matrix of rank $n-1$ generated by a set of unit vectors $\{v_a : a \in [n]\}$ for which
\begin{align}
v_n= \sum_{a \in [n-1]} \alpha_a v_a
\end{align}
for some non-zero scalars $\alpha_1,\dots,\alpha_{n-1} \in \complex \setminus \{0\}$. For any integer $2\leq r \leq n-1$, if ${P \in \corrr{r}{\complex^n}}$, then $P$ is $r$-decomposable as the Schur product of $n-2$ correlation matrices of rank $\leq r$.
\end{lemma}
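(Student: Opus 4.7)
The plan is to start from any $r$-decomposition of $P$ and compress it down to exactly $n-2$ factors by exploiting the linear relation $v_n=\sum_{a\in[n-1]}\alpha_a v_a$. Fix unit product vectors $\{x_a:a\in[n]\}\subset\proS{\X_1:\dots:\X_m}$, with $\dim\X_i\leq r$ for every $i$, that generate $P$. Because the linear dependence pattern of a generating set of $P$ is intrinsic to $P$, the vectors $x_1,\dots,x_{n-1}$ are linearly independent and $x_n=\sum_{a=1}^{n-1}\alpha_a x_a$ with every $\alpha_a$ nonzero.

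First I would apply Lemma~\ref{linincor} to $\{x_1,\dots,x_{n-1}\}$: since $x_n$ is itself a product vector, the set $I:=\{i\in[m]:\dim\spn\{x_{a,i}:a\in[n-1]\}>1\}$ satisfies $|I|\leq n-2$. For each $i\in J:=[m]\setminus I$, pick a unit $y_i\in\X_i$ with $x_{a,i}=\gamma_{a,i}y_i$ for all $a\in[n-1]$. The main obstacle is then to show that $x_{n,i}$ is also a scalar multiple of $y_i$ for every $i\in J$. To do this I would group subsystems: with $Y=\bigotimes_{i\in J}y_i$ and $c_a=\prod_{i\in J}\gamma_{a,i}$, we have $x_a=(c_aY)\otimes\bigotimes_{i\in I}x_{a,i}$ for $a\in[n-1]$, so that
$$x_n=Y\otimes\biggl(\sum_{a=1}^{n-1}\alpha_a c_a\bigotimes_{i\in I}x_{a,i}\biggr).$$
Since $x_n$ is itself a nonzero product vector equal to $\bigl(\bigotimes_{i\in J}x_{n,i}\bigr)\otimes\bigl(\bigotimes_{i\in I}x_{n,i}\bigr)$, the bipartite uniqueness of product factorizations (i.e.\ $a\otimes b=c\otimes d$ with nonzero factors forces $a\parallel c$) shows that $\bigotimes_{i\in J}x_{n,i}$ is a scalar multiple of $Y$, and hence each $x_{n,i}$ with $i\in J$ is parallel to $y_i$.

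With this in hand, every factor $R_i$ with $i\in J$ is a rank-one correlation matrix, and so is $R_0:=\bigodot_{i\in J}R_i$, since rank-one correlation matrices have unit-modulus entries and are closed under the Schur product. Because $\rank(P)=n-1\geq 2$, the set $I$ is nonempty; I would then absorb $R_0$ into some $R_{j_0}$ with $j_0\in I$. Writing $R_0=uu^*$ with $|u_a|=1$ and $D_u=\diag{u}$, we get $R_0\odot R_{j_0}=D_u R_{j_0}D_u^*$, which is a correlation matrix of the same rank as $R_{j_0}$, namely $\leq r$. This yields a decomposition of $P$ into $|I|\leq n-2$ factors of rank $\leq r$; if $|I|<n-2$, I would pad with copies of the $n\times n$ all-ones correlation matrix (rank-one and the identity for the Schur product) to obtain exactly $n-2$ factors.
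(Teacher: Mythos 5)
Your proof is correct and follows essentially the same route as the paper's: apply Lemma~\ref{linincor} to the $n-1$ linearly independent product vectors to conclude that at most $n-2$ subsystems carry non-parallel factors, then absorb the resulting rank-one correlation matrices via conjugation by a diagonal unitary $\diag{u}$. You are in fact slightly more explicit than the paper in two spots --- verifying that $x_{n,i}$ is parallel to the common direction $y_i$ on the subsystems in $J$ via the bipartite factorization argument, and padding with all-ones factors to reach exactly $n-2$ --- but the underlying argument is the same.
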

\begin{proof}
By assumption, there exists a positive integer $m\geq 2$, complex Euclidean spaces $\X_1,\dots,\X_m$, and unit product vectors $\{u_a : a \in [n]\} \subset \proS{\X_1 : \dots : \X_m}$ that generate $P$ and satisfy
\begin{align}
u_n = \sum_{a \in [n-1]} \alpha_a u_a,
\end{align}
where the vectors $\{ u_a : a \in [n-1]\}$ are linearly independent by the condition ${\rank(P)=n-1}$. By Lemma~\ref{linincor}, this implies $\dim\spn\{ u_{a,i} : a \in [n]\}>1$ for at most $n-2$ indices $i \in [m]$. For each $i \in [m]$, let $R_i$ be the correlation matrix generated by $\{ u_{a,i} : a \in [n]\}$, so that
\begin{align}\label{rdecomp}
P=R_1 \odot \dots \odot R_m.
\end{align}
Then $\rank{R_i}>1$ for at most $n-2$ indices $i \in [m]$.

We conclude by showing that for any correlation matrix $R$ and rank-one correlation matrix $R'$, $R \odot R'$ is a correlation matrix with $\rank(R \odot R')=\rank(R)$. This will complete the proof, since all the rank-one correlation matrices in the $r$-decomposition~\eqref{rdecomp} can be absorbed into the $\leq n-2$ correlation matrices of rank $>1$ to construct the desired decomposition.

It follows from Schur's product theorem that $R \odot R'$ is a correlation matrix \cite{horn2013matrix}. Since $R'$ is positive semidefinite and rank-one, then $R'=x x^*$ for some vector $x$. Furthermore, since $R'$ has ones on the diagonal, each element of $x$ has unit modulus. It follows that
\begin{align}
R \odot R'= R\odot x x^* = \diag{x} R \hspace{.3em} \diag{x}^*,
\end{align}
where $\diag{x}$ is the diagonal unitary matrix with $\diag{x}(a,a)=x(a)$. Since $\diag{x}$ has full rank, then $\rank(\diag{x} R \hspace{.3em} \diag{x}^*)=\rank(R)$, which completes the proof.
\end{proof}

Theorem~\ref{4equiv} follows easily from Lemma~\ref{lindep} and Lemma~\ref{closed}. Now we prove Proposition~\ref{infinitefamily}.
\begin{proof}[Proof of Proposition~\ref{infinitefamily}]
We have
\begin{align}
P=
\begin{bmatrix}
    1 & p & p & \alpha_1+(\alpha_1+\alpha_2)p  \\
    p & 1 & p & \alpha_2+2\alpha_1 p  \\
    p & p & 1 & \alpha_1+(\alpha_1+\alpha_2)p \\
    \con{\alpha_1}+(\con{\alpha_1}+\con{\alpha_2})p & \con{\alpha_2}+2\con{\alpha_1} p & \con{\alpha_1}+(\con{\alpha_1}+\con{\alpha_2})p & 1
\end{bmatrix}.
\end{align}
We construct $P$ as
\begin{align}
P=Q_1 \odot Q_2,
\end{align}
where
\begin{align}
Q_1=&
\begin{bsmallmatrix}
    1 & \sqrt{\frac{1+p}{2}} & p& \sqrt{\frac{1+p}{2}}  \\
    \sqrt{\frac{1+p}{2}} & 1 & \sqrt{\frac{1+p}{2}} & 1 \\
    p & \sqrt{\frac{1+p}{2}} & 1 & \sqrt{\frac{1+p}{2}} \\
    \sqrt{\frac{1+p}{2}} & 1 & \sqrt{\frac{1+p}{2}} & 1
\end{bsmallmatrix},
\\
Q_2=&
\begin{bsmallmatrix}
    1 & p\sqrt{\frac{2}{1+p}} & 1& \sqrt{\frac{2}{1+p}}(\alpha_1+(\alpha_1+\alpha_2)p)  \\
    p \sqrt{\frac{2}{1+p}} & 1 & p \sqrt{\frac{2}{1+p}} & \alpha_2+2\alpha_1 p  \\
    1 &p  \sqrt{\frac{2}{1+p}} & 1 & \sqrt{\frac{2}{1+p}}(\alpha_1+(\alpha_1+\alpha_2)p) \\
   \sqrt{\frac{2}{1+p}} (\con{\alpha_1}+(\con{\alpha_1}+\con{\alpha_2})p) & \con{\alpha_2}+2\con{\alpha_1} p & \sqrt{\frac{2}{1+p}}(\con{\alpha_1}+(\con{\alpha_1}+\con{\alpha_2})p) & 1
\end{bsmallmatrix}.
\end{align}

The equality is clear; it only remains to show that $Q_1$ and $Q_2$ are positive semidefinite and rank two.

First, it is easily verified that $Q_1$ is the correlation matrix generated by the unit vectors
\begin{align}
q_{1,1}&=e_0\\
q_{1,2}&=\sqrt{\frac{1+p}{2}}e_0+\sqrt{\frac{1-p}{2}}e_1\\
{q_{1,3}}&=pe_0+\sqrt{1-p^2}e_1\\
{q_{1,4}}&={q_{1,2}},
\end{align}
which implies $Q_1$ is positive semidefinite. Furthermore, $\rank(Q_1)\leq 2$, since these vectors span at most a two-dimensional space.

Second, we verify that $Q_2$ is the correlation matrix of the unit vectors
\begin{align}
{q_{2,1}}&= p \sqrt{\frac{2}{1+p}} e_{0}+\frac{\con{\alpha_1}}{\abs{\alpha_1}}\sqrt{\frac{1+p-2p^2}{1+p}}e_{1}\\
{q_{2,2}}&=e_{0}\\
{q_{2,3}}&={q_{2,1}}\\
{q_{2,4}}&=(\alpha_2+2\alpha_1 x)e_{0}+\abs{\alpha_1}\sqrt{2(1+p-2p^2)}e_{1},
\end{align}
which will complete the proof, since it implies $\rank(Q_2)\leq 2$ as above. The vectors $q_{2,1},q_{2,2}, q_{2,3}$ are easily seen to be normalized. For $q_{2,4}$, recall the normalization condition on $v_4$
\begin{align}
\ip{v_4}{v_4}=2\abs{\alpha_1}^2 (p+1) + \abs{\alpha_2}^2 + (\con{\alpha_1}\alpha_2+\alpha_1 \con{\alpha_2})2p=1,
\end{align}
which implies
\begin{align}
1-\abs{\alpha_2+2\alpha_1 p}^2=1- \left( \abs{\alpha_2}^2 + 4 \abs{\alpha_1}^2 p^2 + (\con{\alpha_1}\alpha_2+\alpha_1 \con{\alpha_2})2p\right)=2\abs{\alpha_1}^2 (1+p-2p^2).
\end{align}
It follows that ${q_{2,4}}$ is normalized. Now we show that the inner products between $q_{2,1},\dots,q_{2,4}$ reproduce $Q_2$. All are easily seen except $\ip{q_{2,1}}{q_{2,4}}$, which we now verify:

\begin{align}
\ip{q_{2,1}}{q_{2,4}}&=p \sqrt{\frac{2}{1+p}}(\alpha_2+2\alpha_1 p)+\frac{{\alpha_1}}{\abs{\alpha_1}}\sqrt{\frac{1+p-2p^2}{1+p}}\abs{\alpha_1}\sqrt{2(1+p-2p^2)}\\
							&= \sqrt{\frac{2}{1+p}}\left(p(\alpha_2+2\alpha_1 p)+\alpha_1 (1+p-2p^2)\right)\\
							&=\sqrt{\frac{2}{1+p}}(\alpha_1+(\alpha_1+\alpha_2)p).
\end{align}
This completes the proof.
\end{proof}

\section{Acknowledgments}
I thank Norbert L{\"u}tkenhaus for first proposing this research topic, and for many helpful discussions in earlier stages of this work. I thank John Watrous for helpful discussions and comments on drafts of this manuscript. I thank Chi-Kwong Li, Ashutosh Marwah, and Daniel Puzzuoli for helpful discussions.

\bibliographystyle{alpha}
\bibliography{hakye}

\end{document}